\def\BibTeX{{\rm B\kern-.05em{\sc i\kern-.025em b}\kern-.08em
    T\kern-.1667em\lower.7ex\hbox{E}\kern-.125emX}}
\newtheorem*{remark}{Remark}
\newtheorem{theorem}{Theorem}
\newtheorem{lemma}{Lemma}
\title{\LARGE \bf
Barrier Certified Safety Learning Control: \\When Sum-of-Square Programming Meets Reinforcement Learning
}
\author{Hejun Huang$^{1}$, Zhenglong Li$^{2}$, Dongkun Han$^{1*}$
\thanks{$^{1}$Hejun Huang and Dongkun Han are with the Department of Mechanical and Automation Engineering, 
        The Chinese University of Hong Kong}
\thanks{$^{2}$Zhenglong Li is with the Department of Electrical and Electronic Engineering, The University of Hong Kong}
\thanks{*Email: {\tt\small dkhan@mae.cuhk.edu.hk}}
}
\begin{document}

\maketitle
\thispagestyle{empty}
\pagestyle{empty}

\begin{abstract}
Safety guarantee is essential in many engineering implementations. Reinforcement learning provides a useful way to strengthen safety. However, reinforcement learning algorithms cannot completely guarantee safety over realistic operations. To address this issue, this work adopts control barrier functions over reinforcement learning, and proposes a compensated algorithm to completely maintain safety. Specifically, a sum-of-squares programming has been exploited to search for the optimal controller, and tune the learning hyperparameters simultaneously. Thus, the control actions are pledged to be always within the safe region. The effectiveness of proposed method is demonstrated via an inverted pendulum model. Compared to quadratic programming based reinforcement learning methods, our sum-of-squares programming based reinforcement learning has shown its superiority.
\end{abstract}


\section{Introduction}
Reinforcement learning (RL) is one of the most popular methods for accomplishing the long-term objective of a Markov-decision process \cite{sutton2018reinforcement,duan2016benchmarking,mnih2015human,silver2016mastering,lillicrap2015continuous,9560769}. This learning method aims for an optimal policy that can maximize a long-term reward in a given environment recursively. For the case of RL policy gradient method \cite{peters2008reinforcement, sutton1999policy, silver2014deterministic}, this reward-related learning problem is traditionally addressed using gradient descent to obtain superior control policies. 

RL algorithms might generate desired results in some academic trials, e.g., robotic experiments \cite{kober2013reinforcement,kormushev2013reinforcement} and autonomous driving contexts \cite{levinson2011towards,kiran2021deep}. But for realistic situations, how to completely guarantee the safety based on these RL policies is still a problem. The intermediate policies from the RL agent's exploration and exploitation processes may sometimes ruin the environment or the agent itself, which is unacceptable for safety-critical systems. Moreover, disturbances around the system would confuse the agent over learning process, which make the control tasks more complex and challenging.

To cope with the issues above, various methods are proposed to specify dynamical safety during the learning process. Initially, the region of attraction (ROA) from control Lyapunov functions or Lyapunov-like function work \cite{wang2018permissive,han19tac} are used to certify the dynamical safety. Related and useful work of ROA estimation are introduced in \cite{chesi2011domain,han12tii,han2014tcas1,han2016estimating}. In 2017, \cite{berkenkamp2017safe} started to execute the RL's exploration and exploitation in ROA to maintain safety. Different from region constraint, in 2018 \cite{alshiekh2018safe,2204.00755} proposed a \textit{shield} framework to select safe actions repeatedly. Inspired by this work, \cite{cheng2019end,pmlr-v97-cheng19a} further combined the control barrier function (CBF) to synthesis the RL-based controller. Note that, the CBF is not new to identify safe regions in control theory, \cite{hsu2015control,thirugnanam2022safety} are representative work to combine CBF in biped robot and \cite{wang2018safe} explored its implementation of quadrotors, etc. Appropriate region constraint for RL is important for building safe RL. Hence, a reliable learning method is expected to construct a controller inside the safe region over the process.

In this paper, we will consider a state-space model in the polynomial form, including an unknown term that estimated by Gaussian process. Then, we will apply Deep Deterministic Policy Gradient (DDPG) algorithm to design a controller. The generated control action will impact system dynamics and further influence the sum-of-squares programming (SOSP) solution of control barrier functions. 

The main contributions of this work are: (1) Formulate the framework to embed SOSP-based CBF into DDPG algorithm; (2) compare the performance of quadratic programming (QP)-based and SOSP-based controller with DDPG algorithm; and (3) demonstrate the safety and efficiency of DDPG-SOSP algorithm via an inverted pendulum model.

This paper is organized as follows, we first present some preliminaries about reinforcement learning, Gaussian process and control barrier function in Section \ref{sec:preliminary}. Then, in Section \ref{sec:SOSP}, we formulate the steps to compute an SOSP-based controller and introduce a corresponding DDPG-SOSP algorithm. Numerical examples are given for these cases in Section \ref{sec:Experiment}, then we discuss the learning efficiency and safety performance of QP-based and SOSP-based strategies, before concluding this paper.

\section{Preliminary} \label{sec:preliminary}

A discrete-time nonlinear control-affine system with state $s_{t},s_{t+1}\in S$ and control action $a_t\in A$ is considered as:   
\begin{equation}
s_{t+1} = f(s_t) + g(s_t) a_t + d(s_t), \forall t\in[\underline{t},\overline{t}],
\label{eq:transition_dynamics_2}
\end{equation}
\noindent
where $s_t, s_{t+1}$ are finished within a bounded time $t$, while $\underline{t}$ and $\overline{t}$ are constants.

\textit{Notation:} From the infinite-horizon discounted Markov decision process (MDP) theory \cite{puterman2014markov}, let $r:\mathcal{S}\times \mathcal{A} \rightarrow\mathcal{R}$ denote the reward of the current state-action pair $(s_t,a_t)$ in (\ref{eq:transition_dynamics_2}), let $P(s_{t},s_{t+1},a_t)=p(s_{t+1}\vert s_t,a_t)$ denote the probability distribution to next state $s_{t+1}$, and let $\gamma\in(0,1)$ denote a discount factor. Thus, a MDP tuple $\tau =(s_t, s_{t+1}, a_t, r_t, P_t, \gamma)$ establishes at each state $t$. 

\subsection{Reinforcement Learning}

RL focuses on training an agent to map various situations to the most valuable actions in a long-term process. During the process, the agent will be inspired by a cumulative reward $r$ to find the best policy $\pi(a\vert s)$. Various novel RL algorithms are proposed to select optimal $\pi(a\vert s)$. For more details of selecting $\pi$, we kindly refer interested readers to \cite{sutton2018reinforcement}. 

Given MDP tuple $\tau$, the RL agent is selecting an optimal policy $\pi^*$ by maximizing the expected reward $J(\pi)$
\begin{equation}
J(\pi) = \mathbb{E}_\tau[\sum^{\infty}_{t=0}\gamma^t r(s_t,a_t)].
\label{eq:reward}
\end{equation}

\noindent
Furthermore, the action value function $Q_\pi$ can be generated from $\tau$ and satisfies the Bellman equation as follows,
\begin{equation}
\begin{aligned}
Q_\pi(s_t,a_t) &= \mathbb{E}_{s_t,a_t}[r_t+\gamma \mathbb{E}_{a_{t+1}}[Q_\pi (s_{t+1},a_{t+1})]]\\
&=\mathbb{E}_{s_{t+1},a_{t+1},\dots}[\sum^{\infty}_{l=0}\gamma^l r(s_{t+l},a_{t+l})].
\label{eq:saeq}
\end{aligned}
\end{equation}

In this paper, we will use DDPG algorithm, a typical actor-critic and off-policy method, to handle the concerned dynamics of (\ref{eq:transition_dynamics_2}). Let parameters $\theta^{\pi}$ and $\theta^Q$ denote the neural network of actor and critic, respectively. By interacting with the environment repeatedly, the actor will generate an actor policy $\pi:\mathcal{S}\rightarrow\mathcal{A}$ and be evaluated by the critic via $Q_\pi:\mathcal{S}\times\mathcal{A}\rightarrow\mathcal{R}$. These actions will be stored in replay buffer to update the policy gradient w.r.t $\theta^{\pi}$, and the loss function w.r.t $\theta^Q$, once if the replay buffer is fulfilled.

However, the state-action of DDPG algorithm generates without a complete safety guarantee, which is unacceptable in safety critical situations. Inspired by the work of \cite{cheng2019end}, we propose a programming assist method to maintain safety over the learning process in Section \ref{sec:SOSP}.

\subsection{Gaussian Process}	

Gaussian processes (GP) estimate the system and further predict dynamics based on the prior data. A GP is a stochastic process that construct a joint Gaussian distribution with concerned states $\{s_1,s_2,\dots\}\subset S$. We use GP to estimate the unknown term $d$ in (\ref{eq:transition_dynamics_2}) during the system implementation. Mixed with an independent Gaussian noise $(0,\sigma_n^2)$, the GP model's prior data $\{d_0,d_1,\dots ,d_k\}$ can be computed from $d_k = s_{k+1}-f-g\cdot a_k$ indirectly. Then, a high probability statement of the posterior output $\hat{d}:S\rightarrow \mathbb{R}$ generates as
\begin{equation}
    \begin{aligned}
    d(s)\sim \mathcal{N}(m_d(s),\sigma_d(s)),
    \end{aligned}
\end{equation}
\noindent
where the mean function $m_d(s)$ and the variance function $\sigma_d(s)$ can describe the posterior distribution as
\begin{equation}
    \begin{aligned}
    m_d(s)-k_\delta \sigma_d(s)\leq d(s)\leq m_d(s)+k_\delta \sigma_d(s),
    \end{aligned}
\end{equation}

\noindent
with probability greater or equal to $(1-\delta)^k$, $\delta\in(0,1)$ \cite{lederer2019uniform}, where $k_\delta$ is a parameter to design interval $[(1-\delta)^k,1]$. Therefore, by learning a finite number of data points, we can estimate the unknown term $d(s_*)$ of any query state $s_*\in \mathcal{S}$.

As the following lemma declaims in \cite{huang2021estimating,han2022sum}, there exists a polynomial mean function to approximate the term $d$ via GP, which can predict the output of state $s_*$ straightly here.
\begin{lemma}
\label{lem:polynomial_model}
Suppose we have access to $k$ measurements of $d(s)$ in (\ref{eq:transition_dynamics_2}) that corrupted with Gaussian noise $(0,\sigma_n^2)$. If the norm unknown term $d(s)$ bounded, the following GP model of $d(s)$ can be established with polynomial mean function $m_d(s_*)$ and covariance function $\sigma_d^2(s_*)$,
\begin{equation}
	\label{eqn:optimize_output_distribution_mean}
	\begin{aligned}
		m_d(s_*) &= \varphi(s_*)^{\mathrm{T}} w,\\
		\sigma_d^2(s_*) &= k(s_*, s_*)-k_*^{\mathrm{T}}(K+\sigma_n^2I)^{-1}k_*,
	\end{aligned}
\end{equation}	
\noindent within probability bounds $[(1-\delta)^k,1]$, where $\delta\in(0,1)$, $s_*$ is a query state, $\varphi(s_*)$ is a monomial vector, $w$ is a coefficient vector, $[K]_{(i,j)}=k(s_i, s_j)$ is a kernel Gramian matrix and $k_*=[k(s_1,s_*), k(s_2,s_*), \dots, k(s_k, s_*)]^{\mathrm{T}}$.
\end{lemma}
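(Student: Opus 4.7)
The plan is to invoke standard Gaussian process regression with a specifically chosen polynomial kernel, and then show that the posterior mean collapses to a linear combination of monomials in the query point $s_*$. First I would set $\varphi(s) = [\varphi_1(s), \dots, \varphi_N(s)]^{\mathrm{T}}$ to be a basis of monomials of degree up to some prescribed order (whose sufficiency will be argued from the boundedness hypothesis on $d$), and define a polynomial kernel by $k(s,s') = \varphi(s)^{\mathrm{T}} \varphi(s')$. With this kernel the Gram matrix factors as $K = \Phi \Phi^{\mathrm{T}}$, where $\Phi$ is the design matrix whose $i$-th row is $\varphi(s_i)^{\mathrm{T}}$, and the cross-covariance vector becomes $k_* = \Phi\, \varphi(s_*)$.

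Next I would write down the standard GP regression identities for the posterior given the observation vector $y = [d_0 + \varepsilon_0, \dots, d_{k-1} + \varepsilon_{k-1}]^{\mathrm{T}}$ with $\varepsilon_i \sim \mathcal{N}(0,\sigma_n^2)$; conditioning on these $k$ measurements yields the familiar formulas
\begin{equation*}
\mu(s_*) = k_*^{\mathrm{T}}(K+\sigma_n^2 I)^{-1} y, \qquad
\sigma_d^2(s_*) = k(s_*,s_*) - k_*^{\mathrm{T}}(K+\sigma_n^2 I)^{-1} k_*.
\end{equation*}
Substituting $k_* = \Phi\,\varphi(s_*)$ into the mean gives
\begin{equation*}
\mu(s_*) = \varphi(s_*)^{\mathrm{T}}\, \Phi^{\mathrm{T}} (K+\sigma_n^2 I)^{-1} y,
\end{equation*}
so defining $w := \Phi^{\mathrm{T}}(K+\sigma_n^2 I)^{-1} y$ delivers exactly $m_d(s_*) = \varphi(s_*)^{\mathrm{T}} w$, which is polynomial in $s_*$. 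The covariance formula is already in the claimed form and requires no further manipulation.

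Finally I would argue the probability statement. For each of the $k$ sampled points, a single-step confidence interval of the Gaussian posterior at level $1-\delta$ can be written as $|d(s_i) - m_d(s_i)| \le k_\delta\, \sigma_d(s_i)$; intersecting these $k$ independent events, each holding with probability at least $1-\delta$, yields the joint bound $(1-\delta)^k$ claimed in the lemma. The boundedness assumption on $\|d\|$ is what ensures that a finite-degree monomial basis $\varphi$ suffices to keep the approximation error controlled (via a Weierstrass-type argument on the compact state set), so that the identification of the kernel-induced feature map with $\varphi$ is legitimate.

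The main obstacle I anticipate is in this last step: rigorously justifying that a \emph{finite}-dimensional polynomial feature map is adequate, rather than some infinite-dimensional RKHS. I would handle it by taking the state set as compact (implicit in the bounded-time assumption of (\ref{eq:transition_dynamics_2})) and invoking the Stone--Weierstrass theorem together with the boundedness of $d$ to fix a degree large enough that the residual sits within the noise floor $\sigma_n$, after which all subsequent algebra is exact and the closed-form expressions in (\ref{eqn:optimize_output_distribution_mean}) follow directly.
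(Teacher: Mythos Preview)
The paper does not actually prove this lemma: it is stated with a citation to the authors' own prior work \cite{huang2021estimating,han2022sum} and to \cite{lederer2019uniform} for the probability bound, and no argument is given in the present paper. So there is no in-paper proof to compare your proposal against.

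That said, your derivation of the polynomial form of $m_d$ is sound and is essentially the standard way to obtain it: once the kernel factors through a finite monomial feature map $\varphi$, the posterior mean is automatically $\varphi(s_*)^{\mathrm{T}} w$ with $w=\Phi^{\mathrm{T}}(K+\sigma_n^2 I)^{-1}y$, and the variance formula is the textbook GP posterior variance verbatim.

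Where your proposal has a genuine gap is the probability step. The bound $(1-\delta)^k$ in this line of work (see \cite{lederer2019uniform}) is a statement about the posterior confidence interval at the \emph{query} point $s_*$, obtained from concentration of the GP posterior using the $k$ noisy training observations; it is not obtained by intersecting $k$ separate confidence events at the \emph{training} inputs $s_1,\dots,s_k$. Your argument conflates these two things: the events ``$|d(s_i)-m_d(s_i)|\le k_\delta\sigma_d(s_i)$'' at training points are neither independent of one another nor the events whose intersection gives the stated guarantee at $s_*$. Moreover, the boundedness hypothesis on $d$ is used in that literature to control the RKHS norm (so that the posterior concentration inequality applies), not to invoke Stone--Weierstrass for truncating to a finite polynomial degree. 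If you want a rigorous proof, you should replace that paragraph with the actual uniform-error-bound argument from \cite{lederer2019uniform}, specialized to a polynomial kernel.
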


Lemma \ref{lem:polynomial_model} generates a polynomial expression of $d(s)$ within probabilistic range $[(1-\delta)^k,1]$. Meanwhile, the nonlinear term $f(s)$ in (\ref{eq:transition_dynamics_2}) can be approximated by Chebyshev interpolants $P_k(s)$ and bounded remainder $\xi(s)$ in certain domain as $f(s) = P_k(s)+\xi(s)$ \cite{trefethen2019approximation}, where $k$ is the degree of the polynomial $P_k(x)$ \cite{trefethen2019approximation}. Now we convert (\ref{eq:transition_dynamics_2}) as

\begin{equation}\label{eqn:learnsys}
\begin{aligned}
s_{t+1}=P_k(s_t)+g(s_t)a_t+d_\xi(s_t).
\end{aligned}
\end{equation}
\noindent
The polynomial (\ref{eqn:learnsys}) is equal to (\ref{eq:transition_dynamics_2}) with a new term $d_\xi(s_t)=d(s_t)+\xi(s_t)$. Consequently, we can obtain a polynomial system within a probability range by learning $d_\xi(s_t)$,
\begin{equation}\label{eqn:usesys}
\begin{aligned}
s_{t+1}=P_k(s_t)+g(s_t)a_t+m_{d_\xi}(s_t),
\end{aligned}
\end{equation}
\noindent
which will be used to synthesis the controller in Section \ref{sec:SOSP}.

\subsection{Control Barrier Function}

The super-level set of the control barrier function (CBF) $h:\mathcal{S}\rightarrow \mathcal{R}$ could validate an safety set $\mathcal{C}$ as
\begin{equation}
    \begin{aligned}
        \mathcal{C}=\{{s_t\in\mathcal{S}:h(s_t)\geq 0}\}.
    \end{aligned}\label{CBF}
\end{equation}
\noindent
Throughout this paper, we refer to $\mathcal{C}$ as a safe region and
\begin{equation}
    \begin{aligned}
        \mathcal{U}=\mathcal{S}-\mathcal{C} =\{s_t\in\mathcal{S}:h(s_t)<0\},
    \end{aligned}
\end{equation}
\noindent
as unsafe regions of the dynamical system (\ref{eqn:usesys}). Then, the state $s_t\in\mathcal{C}$ will not enter into $\mathcal{U}$ by satisfying the forward invariant constraint below,
\begin{equation}\label{eq:cbf-definition}
  \forall s_t\in \mathcal{C}:h(s_t)\geq 0,\,\Delta{h}(s_t,a_t) \geq 0,
\end{equation}
\noindent
where $\Delta{h(s_t,a_t)}=h(s_{t+1})-h(s_t)$. Before we demonstrate the computation of $h$ of the system (\ref{eqn:usesys}), the Positivestellensatz (P-satz) needs to be introduced first \cite{putinar1993positive}.

Let $\mathcal{P}$ be the set of polynomials and $\mathcal{P}^\text{SOS}$ be the set of sum of squares polynomials, e.g., $P(x)=\sum_{i=1}^{k}p_i^2(x), $ where $p_i(x)\in\mathcal{P}$ and $P(x)\in \mathcal{P}^{SOS}$.

\begin{lemma}\label{lem:psatz} For polynomials $\{a_i\}_{i=1}^m$, $\{b_j\}_{j=1}^n$ and $p$, define a set $\mathcal{B}=\{s\in\mathcal{S}:\{a_i(s)\}_{i=1}^m=0, \{b_j(s)\}_{j=1}^n\geq0\}$. Let $\mathcal{B}$ be compact. The condition $\forall x \in \mathcal{S}, p(s)\geq0$ holds if the following condition is satisfied,

\vspace{7pt}\centering{\hspace{32pt}
	$\left\{
	\begin{array}{l}
		\exists r_1,\dots, r_m \in \mathcal{P}, ~ s_1,\dots, s_n \in \mathcal{P}^{\text{SOS}}, \\
		p-\sum^{m}_{i=1}r_i a_i-\sum^{n}_{j=1}s_j b_j \in \mathcal{P}^{\text{SOS}}.
	\end{array} 
	\right. $}
\hfill$\square$
\end{lemma}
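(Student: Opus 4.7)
The plan is to prove only the sufficiency direction asserted in the statement, which is elementary; the harder direction (the existence of such multipliers under an Archimedean assumption) is Putinar's theorem proper and is not claimed here. Interpreting the conclusion as $\forall s\in\mathcal{B},\, p(s)\geq 0$ (the ``$\forall x\in\mathcal{S}$'' on the right-hand side reads as a typo given that global non-negativity cannot follow from a mere constrained certificate), the argument is a pointwise evaluation of the SOS certificate on $\mathcal{B}$.

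First, I would recall the two elementary facts about sum-of-squares polynomials and the set $\mathcal{B}$ that drive everything: (i) any $\sigma\in\mathcal{P}^{\text{SOS}}$ satisfies $\sigma(s)\geq 0$ for every $s\in\mathcal{S}$, since by definition $\sigma(s)=\sum_i p_i^2(s)$; and (ii) by the definition of $\mathcal{B}$, every $s\in\mathcal{B}$ satisfies $a_i(s)=0$ for $i=1,\dots,m$ and $b_j(s)\geq 0$ for $j=1,\dots,n$. Note that the compactness hypothesis on $\mathcal{B}$ is not actually used for the sufficiency direction; it is the assumption that would be invoked if one wanted to assert the converse via Putinar's Archimedean Positivstellensatz.

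Next, I would fix an arbitrary $s\in\mathcal{B}$ and evaluate the certificate. By the SOS assumption on $p-\sum_i r_i a_i-\sum_j s_j b_j$ and fact (i),
\begin{equation*}
p(s)-\sum_{i=1}^m r_i(s)\,a_i(s)-\sum_{j=1}^n s_j(s)\,b_j(s)\;\geq\;0.
\end{equation*}
Substituting $a_i(s)=0$ kills the first sum entirely, so
\begin{equation*}
p(s)\;\geq\;\sum_{j=1}^n s_j(s)\,b_j(s).
\end{equation*}
Each term $s_j(s)\,b_j(s)$ is a product of two non-negative quantities, using fact (i) on $s_j$ and fact (ii) on $b_j$, hence the right-hand side is non-negative. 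Therefore $p(s)\geq 0$, and since $s\in\mathcal{B}$ was arbitrary, the conclusion follows.

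There is no real obstacle in this direction; the only subtle point worth flagging in the write-up is that the $r_i$ are unconstrained polynomials (not SOS), which is fine precisely because they are multiplied by the equality generators $a_i$ that vanish on $\mathcal{B}$, so their sign plays no role. I would close by remarking that the converse implication---producing such $r_i$ and SOS $s_j$ from mere non-negativity of $p$ on $\mathcal{B}$---is the nontrivial content of Putinar's theorem and relies on compactness (in fact the Archimedean property of the quadratic module), which is why that hypothesis appears in the statement even though it is inert for the direction proved here.
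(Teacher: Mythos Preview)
The paper does not supply its own proof of this lemma; it is stated as a known result with a citation to Putinar and closed with a $\square$. Your argument for the stated sufficiency direction is correct, and you are right that the conclusion should be read as $\forall s\in\mathcal{B}$ rather than $\forall s\in\mathcal{S}$, and that compactness is inert for this direction and only relevant for the (unstated) converse.
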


Lemma \ref{lem:psatz} points out that any strictly positive polynomial $p$ lies in the cone that generated by non-negative polynomials $\{b_j\}_{j=1}^n$ in the set of $\mathcal{B}$. Based on the definition of $\mathcal{C}$ and $\mathcal{U}$, P-satz will be adequately used in the safety verification. 

\begin{figure*}[ht] 
	\centering
	\includegraphics[width=0.95\linewidth]{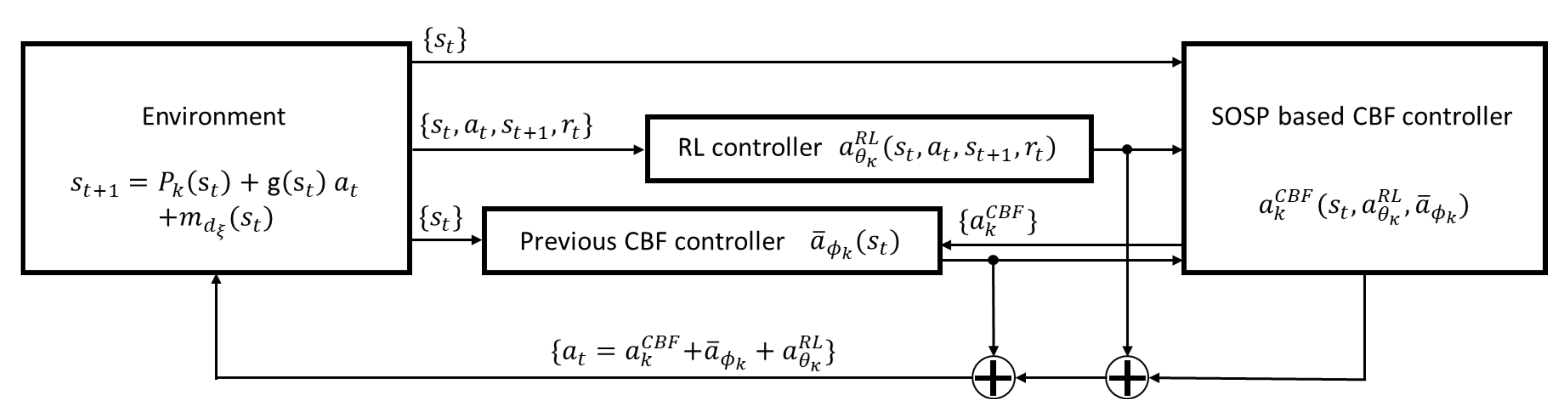}
	\caption{Workflow of the safe RL control with SOS program}
	\label{fig:architecture}
\end{figure*}

The sufficient condition to establish a CBF is the existence of a deterministic controller $a^{CBF}\colon\mathcal{S}\rightarrow\mathcal{A}$. Based on Lemma \ref{lem:psatz}, we can compute a polynomial controller $a^{CBF}$ through SOS program to avoid entering into unsafe regions $\mu_i(s)\in \mathcal{U}, i=1,2,\dots,n$ at each step $t$.

\begin{lemma}\label{lem:CBF—action}
Provided a safe region $\mathcal{C}$ certified by the polynomial barrier function $h$ and some unsafe regions $\mu_i\in\mathcal{U},i\in\mathbb{Z}^+$, in a polynomial system (\ref{eqn:usesys}), if there exists a controller $a^{CBF}$ that satisfies
\begin{equation}
\begin{aligned}\label{eq:CLF-CBF-discrete}
	& && \qquad \underset{s_t\in \mathcal{S},a^{CBF}\in\mathcal{A}; L(s_t),M_i(s_t)\in\mathcal{P}^{SOS};\;\;}{a^*\quad =\quad \arg\min \;{\Vert a^{CBF} \Vert_2}}\\
	&\text{s.t.} &&\;\quad\Delta{h}(s_t,a^{CBF}) - L(s_t)h(s_t)\in\mathcal{P}^{SOS},\\
	& && -\Delta{h}(s_t,a^{CBF})-M_i(s_t)\mu_i(s_t)\in \mathcal{P}^{\text{SOS}},
\end{aligned}
\end{equation}
\noindent
where $L(s_t)$ and $M_i(s_t)$ are SOS polynomials for $i=1,2,\dots,n$. Then, action $a^*$ is a minimum polynomial controller that regulates the system (\ref{eqn:usesys}) safely.
\end{lemma}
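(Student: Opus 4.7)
The plan is to apply the Positivstellensatz (Lemma \ref{lem:psatz}) once to each of the two SOS certificates in (\ref{eq:CLF-CBF-discrete}) in order to recover the sign conditions on $\Delta h$ that, together with the CBF condition (\ref{eq:cbf-definition}), yield safety under the discrete polynomial dynamics (\ref{eqn:usesys}). The controller $a^{CBF}$ is treated throughout as a polynomial function of $s_t$, so that after substitution into (\ref{eqn:usesys}) the quantity $\Delta h = h(s_{t+1}) - h(s_t)$ becomes a polynomial in $s_t$ alone, which is the form required by Lemma \ref{lem:psatz}.

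First I would take the constraint $\Delta h(s_t, a^{CBF}) - L(s_t) h(s_t) \in \mathcal{P}^{SOS}$ and invoke Lemma \ref{lem:psatz} with the single non-negativity generator $b_1 = h$ and SOS multiplier $s_1 = L$. This yields $\Delta h(s_t, a^{CBF}) \geq L(s_t) h(s_t) \geq 0$ on the safe set $\mathcal{C} = \{s : h(s) \geq 0\}$. Combined with $h(s_t) \geq 0$ itself, this is exactly the discrete-time CBF condition (\ref{eq:cbf-definition}), so any $s_t \in \mathcal{C}$ is mapped by the closed-loop dynamics to $s_{t+1} \in \mathcal{C}$, establishing forward invariance of the safe set.

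Next I would handle the certificate $-\Delta h(s_t, a^{CBF}) - M_i(s_t) \mu_i(s_t) \in \mathcal{P}^{SOS}$ symmetrically: writing each unsafe region as a super-level set of $\mu_i$, a second invocation of Lemma \ref{lem:psatz} certifies $-\Delta h \geq M_i \mu_i \geq 0$ on that region for every $i$, which combined with the previous step prevents any trajectory starting in $\mathcal{C}$ from crossing into $\bigcup_i \mu_i$. Since the objective $\Vert a^{CBF}\Vert_2$ is convex and both SOS constraints are convex in the polynomial coefficients of $a^{CBF}$, $L$, and $M_i$, the argmin producing $a^*$ is well-defined whenever the feasibility hypothesis in the statement holds, and $a^*$ is then the desired minimum-effort safe polynomial controller.

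The main obstacle I expect is the degree bookkeeping in the SOS substitution step: once $a^{CBF}(s_t)$ has degree $d_a$ and is plugged into (\ref{eqn:usesys}), the polynomial $\Delta h$ has degree controlled multiplicatively by $\deg h$, $\deg P_k$, and $d_a$, so the multipliers $L$ and $M_i$ must be chosen with matching degree for a valid Positivstellensatz decomposition to exist and for the SOS program to remain tractable. A secondary subtlety is that Lemma \ref{lem:psatz} requires the certifying set to be compact, which in practice is enforced by appending a bounding-box generator for $\mathcal{S}$ to both SOS constraints; this device is standard but is left implicit in (\ref{eq:CLF-CBF-discrete}) and should be acknowledged before Lemma \ref{lem:psatz} is invoked.
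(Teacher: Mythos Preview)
Your proposal is correct and follows essentially the same route as the paper: both arguments observe that a polynomial $a^{CBF}$ keeps $\Delta h$ polynomial, then read off $\Delta h\geq 0$ on $\mathcal{C}$ from the first SOS certificate via Lemma~\ref{lem:psatz} and treat the unsafe-region certificate symmetrically, concluding with the minimization. Your additional remarks on convexity of the feasible set, degree bookkeeping, and the implicit compactness hypothesis are not in the paper's proof but are helpful refinements rather than a different strategy.
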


\begin{proof}
Let us suppose that there exists a CBF $h$ that certified a safe region $\mathcal{C}$. Then, a deterministic input $a_t=a^{CBF}(s_t)$ is needed to establish the necessary condition $\Delta{h(s_t,a^{CBF})}\geq 0$ in (\ref{eq:cbf-definition}) of the system (\ref{eqn:usesys}). 

Since $h$ is a polynomial function, if $a^{CBF}$ is also a polynomial, $\Delta h$ will maintain in the polynomial form. According to Lemma \ref{lem:psatz}, we can formulate this non-negative condition of $\Delta h$ as part of SOS constraint in the cone of $h$ as
\begin{equation}
    \begin{aligned}\label{eq:text1}
		&\Delta h(s_t,a^{CBF}) - L(s_t)h(s_t)\in\mathcal{P}^{SOS},
    \end{aligned}
\end{equation}
\noindent
where the auxiliary SOS polynomials $L(s_t)$ is used to project the related term into an non-negative compact domain $\mathcal{C}$ that certified by the superlevel set of $h(s_t)$. 

We continue to leverage the P-satz again to drive the dynamics of (\ref{eqn:usesys}) far from unsafe areas $\mu_i(s_t)$ as follows,
\begin{equation}
    \begin{aligned}\label{eq:text2}
        -\Delta h(s_t,a^{CBF}) - \mu_i(s_t)M_i(s_t)\in\mathcal{P}^{SOS},
    \end{aligned}
\end{equation}
\noindent
where $M_i(s_t)$ denote the corresponding SOS polynomials. Guided by the optimization (\ref{eq:CLF-CBF-discrete}), we can solve a minimal cost control to maintain system safety with (\ref{eq:text1}) and (\ref{eq:text2}), which completes the proof.
\end{proof}
This lemma maintains the safety of polynomial system (\ref{eqn:usesys}) and minimizes the current control $a^{CBF}$ with a SOS program. Multiple unsafe regions $\mu_i(s_t)$ are also considered in solving the target input $a^*$ in (\ref{eq:CLF-CBF-discrete}). Since Lemma 1 discussed the probabilistic relationship of (\ref{eqn:usesys}) and (\ref{eq:transition_dynamics_2}), the computed result of (\ref{eq:CLF-CBF-discrete}) of (\ref{eq:transition_dynamics_2}) can also be a feasible control to regulate the true dynamics safely.

\section{CBF Guiding Control with Reinforcement Learning}\label{sec:SOSP}	

In the first part of this section, we will further express the computation steps of (\ref{eq:CLF-CBF-discrete}). In the second part, we will go through the details about the SOSP-based CBF guiding control with DDPG algorithm.

\subsection{SOS Program Based CBF}
Different from \cite{cheng2019end}, our control $a^{CBF}$ is solved by SOS program with polynomial barrier functions, rather than the QP and linear barrier functions. In \cite{cheng2019end}, they initially constructed QP with monotonous linear barrier constraints to cope with RL. Obviously, a safe but more relaxed searching area will enhance the RL training performance. We propose a SOS program to compute a minimal control with polynomial barrier function based on Lemma \ref{lem:CBF—action}. 

Lemma \ref{lem:CBF—action} declared that we can compute a minimal control $a^{CBF}$ from an approximated polynomial system (\ref{eqn:usesys}) directly. However, it is hard to define the minimization of the convex function $a^{CBF}$ in (\ref{eq:CLF-CBF-discrete}). The optimal control can be searched by using the following \textbf{2} steps:

\textbf{Step 1:} Search an optimal auxiliary factor $L(s_t)$ by maximizing the scalar $\epsilon_1$,
\begin{equation}
	\label{bcalgo:step2}
	\begin{aligned}
		& && \underset{\epsilon_1\in\mathcal{R}^{+}, L(s_t)\in\mathcal{P}^{\text{SOS}}}{L^*(s_t) \quad = \quad \arg\max \quad \epsilon_1}\\
		&  \text{s.t.}
		&& \Delta h(s_t,a^{CBF}) - L(s_t) h(s_t) - \epsilon_1 \in \mathcal{P}^{\text{SOS}}\\
		& &&-\Delta{h}(s_t,a^{CBF})-M_i(s_t)\mu_i(s_t)\in \mathcal{P}^{\text{SOS}},
	\end{aligned}
\end{equation}

\noindent where $L(s_t)$ is an auxiliary factor to obtain a permissive constraint for the existing action $a^{CBF}$.

\textbf{Step 2:} Given $L(s_t)$ from last step, search an optimal control of $a^{CBF}$ with minimal control cost by minimizing the scalar $\epsilon_2$,
\begin{equation}
	\label{bcalgo:step3}
	\begin{aligned}
		& &&\underset{\epsilon_2\in\mathcal{R}^{+}}{a^*_t\quad = \quad \arg\min \quad \epsilon_2} \\
		& \text{s.t.} && \Delta{h}(s_t,a^{CBF}) - L(s_t) h(s_t) - \epsilon_2 \in \mathcal{P}^{\text{SOS}}\\
		& &&-\Delta{h}(s_t,a^{CBF})-M_i(s_t)\mu_i(s_t)\in \mathcal{P}^{\text{SOS}}.
	\end{aligned}
\end{equation}

\begin{remark}
The scalars $\epsilon_1$ and $\epsilon_2$ in (\ref{bcalgo:step2}) and (\ref{bcalgo:step3}) are used to limit the magnitude of the coefficients' optimization of $L(s_t)$ and $a_t$, respectively.
\end{remark}

The SOS programs above demonstrate the details of a target controller computation of the system (\ref{eqn:usesys}), and the solution of (\ref{bcalgo:step3}) regulates the dynamical safety via a control barrier function directly.

\subsection{SOS Program Based CBF with Reinforcement Learning}\label{ssec:SOS}

The workflow of the CBF guiding DDPG control algorithm is shown in Fig. \ref{fig:architecture}, which is akin to the original idea in \cite{cheng2019end} to push the agent's action into a safe training architecture. In Fig. \ref{fig:architecture}, we list these flowing elements into brackets and highlight the corresponding input of each factor computation. 

We illustrate the core idea of CBF guiding RL control as,
\begin{equation}\label{eqn:cbf_rl_control}
    \begin{aligned}
        a_t = &a^{RL}_{\theta_k} +\bar{a}_k+a_t^{CBF},
    \end{aligned}
\end{equation}
\noindent
where the final action $a_t$ consists of a RL-based controller $a^{RL}_{\theta_k}$, a previous deployed CBF controller $\bar{a}_{\phi}$ and a SOS program based CBF controller $a_t^{CBF}$. A clear distinction of the subscript $t$ and $k$ is stated here: $t$ denotes the step number of each policy iteration and $k$ denotes the policy iteration number over the learning process.

More specifically, the first term in (\ref{eqn:cbf_rl_control}) is an action generated from a standard DDPG algorithm with parameter $\theta$. The second term $\bar{a}_{k}=\sum_{i=0}^{k-1}a_i^{CBF}$ demotes a global consideration of previous CBF controllers. Since it is impractical to compute the exact value of each CBF action $a_i^{CBF}$ at each step. Supported by the work of \cite{cheng2019end}, we approximate this term as $ \bar{a}_{\phi_k} \approx \bar{a}_k = \sum^{k-1}_{i=0}a_i^{CBF}$, where $\bar{a}_{\phi_k}$ denotes an output from the multilayer perceptron (MLP) with hyperparameters $\phi$. Then, we fit the MLP and update $\phi_{k}$ with $\sum^{k-1}_{i=0}a_i^{CBF}(s,a_{\theta_0}^{RL},\dots,a_{\theta_{i-1}}^{RL})$ at each episode.

The third term $a^{CBF}_t$ in (\ref{eqn:cbf_rl_control}) is a real-time value based on the deterministic value $s_t$, $a^{RL}_{\theta_k}(s_t)$ and $\bar{a}_{\phi_{k}}(s_t)$. Although there exists an unknown term $d$ in the system (\ref{eq:transition_dynamics_2}), we can still solve the dynamical safety in the approximated polynomial system (\ref{eqn:usesys}). So, the optimal $a_t$ in (\ref{eqn:cbf_rl_control}) can be computed by the SOS program below with deterministic $a^{RL}_{\theta_k}$ and $\bar{a}_{\phi_k}$
\begin{equation}
\begin{aligned}\label{eq:sosclb}\small
	& && \qquad \underset{s_t\in \mathcal{S},a_k\in\mathcal{A}; L(s_t),M_i(s_t)\in\mathcal{P}^{SOS};\;\;}{a^*\quad =\quad \arg\min \;{\Vert a_k \Vert_2}}\\
	&\text{s.t.} &&\;\quad\Delta{h}(s_t,a_k) - L(s_t)h(s_t)\in\mathcal{P}^{SOS},\\
	& && -\Delta{h}(s_t,a_k)-M_i(s_t)\mu_i(s_t)\in \mathcal{P}^{\text{SOS}}.
\end{aligned}
\end{equation}
\noindent
Thus, we can establish a controller by satisfying the solution of (\ref{eq:sosclb}) over the learning process.

\begin{theorem}
Given a barrier function $h$ in (\ref{CBF}), a partially unknown dynamical system (\ref{eq:transition_dynamics_2}) and a corresponding approximated system (\ref{eqn:usesys}), while (\ref{eqn:usesys}) is a stochastic statement of (\ref{eq:transition_dynamics_2}) within the probabilistic range $[(1-\delta)^k,1]$, suppose there exists an action $a_t$ satisfying (\ref{eq:sosclb}) in the following form:
\begin{equation}\label{eqn:the1}
    \begin{aligned}
    a_t(s_t) = &a^{RL}_{\theta_k}(s_t,a_t,s_{t+1},r_t)+\bar{a}_{\phi_k}(s_t,\sum_{i=0}^{k-1}(s_t,a^{CBF}_i))\\
    &+a^{CBF}_{t}(s_t,a^{RL}_{\theta_k},\bar{a}_{\phi_k}).
    \end{aligned}
\end{equation}
\noindent
Then, the controller (\ref{eqn:the1}) guarantees the system (\ref{eq:transition_dynamics_2}) inside the safe region within the range of probability $[{(1-\delta)}^n,1]$.
\end{theorem}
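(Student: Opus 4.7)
The plan is to combine Lemma 3 (deterministic safety of the polynomial surrogate under any controller admissible to the SOS program) with Lemma 1 (probabilistic equivalence of the surrogate and the true dynamics), and then propagate the resulting per-step guarantee across the $n$ time instants on which safety must be certified.

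First I would establish deterministic forward invariance for the polynomial surrogate (\ref{eqn:usesys}). By hypothesis, the composite action $a_t$ in (\ref{eqn:the1}) is a feasible point of (\ref{eq:sosclb}), so the two SOS certificates $\Delta h(s_t,a_t) - L(s_t)h(s_t)\in\mathcal{P}^{\text{SOS}}$ and $-\Delta h(s_t,a_t) - M_i(s_t)\mu_i(s_t)\in\mathcal{P}^{\text{SOS}}$ are satisfied. Invoking the Positivstellensatz (Lemma 2) on the compact sets $\mathcal{C}=\{h\geq0\}$ and $\{\mu_i\geq 0\}$, these certificates translate into $\Delta h(s_t,a_t)\geq 0$ whenever $s_t\in\mathcal{C}$ and $\Delta h(s_t,a_t)\leq 0$ on every unsafe region $\mu_i$, which is precisely the invariance condition (\ref{eq:cbf-definition}) for $h$ in closed loop with $a_t$. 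Note that the internal split of $a_t$ into $a^{RL}_{\theta_k}$, $\bar{a}_{\phi_k}$ and $a^{CBF}_t$ is irrelevant at this point: only the aggregate $a_t$ appears in the SOS program, so the RL term's opacity and the MLP surrogate error $\bar{a}_{\phi_k}-\bar{a}_k$ are both absorbed by the compensator $a^{CBF}_t$ whose very purpose is to restore feasibility of (\ref{eq:sosclb}).

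Next I would lift this deterministic statement from (\ref{eqn:usesys}) to the true system (\ref{eq:transition_dynamics_2}). By Lemma 1 the polynomial surrogate reproduces the true transition at each queried state with confidence at least $(1-\delta)$, so the event ``the surrogate step agrees with the true step'' occurs at time $t$ with probability at least $1-\delta$. Conditioning on this event, the deterministic invariance from the previous step transfers verbatim from (\ref{eqn:usesys}) to (\ref{eq:transition_dynamics_2}), so the true state remains in $\mathcal{C}$. Taking the product of $n$ consecutive per-step confidence bounds then yields that the closed-loop trajectory of (\ref{eq:transition_dynamics_2}) stays in $\mathcal{C}$ for $n$ steps with probability at least $(1-\delta)^n$, which is the claimed range $[(1-\delta)^n,1]$.

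The step I expect to be the main obstacle is the probabilistic composition in the last paragraph: to multiply the per-step confidence into $(1-\delta)^n$ I need either independence of the successive GP queries or a union-bound argument, and I must be careful that each next query state $s_{t+1}$ is itself produced (stochastically) by the very surrogate whose validity is being asserted. I would handle this by applying Lemma 1 pointwise at each query and bookkeeping the joint probability in the same multiplicative way that already produced the $(1-\delta)^k$ factor inside Lemma 1, absorbing any residual dependence into the definition of $\delta$. Once this is done, the remainder of the proof is a direct chain of deterministic implications supplied by Lemmas 2 and 3, so no further analytical obstacles are expected.
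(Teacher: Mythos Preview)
Your proposal is correct and follows the same two-step skeleton as the paper: apply Lemma~3 to the SOS program (\ref{eq:sosclb}) to get deterministic forward invariance of the surrogate (\ref{eqn:usesys}) under the composite action (the paper, like you, observes that $a^{RL}_{\theta_k}$ and $\bar{a}_{\phi_k}$ enter only as deterministic offsets, so Lemma~3 applies unchanged), then lift to (\ref{eq:transition_dynamics_2}) via the probabilistic equivalence furnished by Lemma~1. The one place you go beyond the paper is your final paragraph on per-step probabilistic composition: the paper's proof simply inherits the bound from Lemma~1 in a single stroke and does not separately multiply confidences over time steps, so your union-bound discussion is more careful than what the original actually argues, but not needed to match it.
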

 
\begin{proof}
Regarding the system (\ref{eqn:usesys}), this result follows directly from Lemma \ref{lem:CBF—action} by solving the SOS program (\ref{eq:sosclb}). The only different part of SOS program (\ref{eq:sosclb}) from the SOS program (\ref{eq:CLF-CBF-discrete}) in Lemma \ref{lem:CBF—action} is the action $a_k$ here contains additional deterministic value $a_{\theta_k}^{RL}(s_t)$ and $\bar{a}_{\phi_{k}}(s_t)$. 

Since the output of (\ref{eq:sosclb}) can regulate the dynamics of the model (\ref{eqn:usesys}) in a safe region, while the approximated model (\ref{eqn:usesys}) is a stochastic statement of the original system (\ref{eq:transition_dynamics_2}) with the probability greater or equal to $(1-\delta)^k$. Then, the solution of (\ref{eq:sosclb}) can be regarded as a safe control to drive the system (\ref{eq:transition_dynamics_2}) far from dangerous, which ends the proof. 
\end{proof}

We display an overview of the whole steps in the combination of the aforementioned factors over the learning process. The detailed workflow is outlined in Algorithm \ref{alg:rl-cbf}

\begin{algorithm}[ht]
	\caption{DDPG-SOSP}\label{alg:rl-cbf}
	\KwIn{Origin system (\ref{eq:transition_dynamics_2}); barrier function $h$.}
	\KwOut{RL optimal policy $\pi$.}
	Preprocess (\ref{eq:transition_dynamics_2}) into (\ref{eqn:usesys}).\\		
	Create the tuple $\hat{D}$ to store $\{s_t,a_t,s_{t+1},r_t\}$ in $\hat{D}$.\\
	\For{$k\in\{1,2,\dots ,k\}$}{	
		Execute SOSP (\ref{eq:sosclb}) to obtain the real-time CBF controller $a^{CBF}_{k}$ and store it to train previous CBF controller $\bar{a}_{\theta_k}^{CBF}$.\\
		Actuate the agent with $a_t$ in (\ref{eqn:the1}).\\
		Construct $\hat{D}$ and update $\theta_k$ and $\phi_k$ directly.\\
	}	
	\Return $\pi$.\\
\end{algorithm}

\section{Numerical Example}\label{sec:Experiment}

A swing-up pendulum model from OpenAI Gym environment (\textit{Pendulum-v1}) is used to verify the above theorems,
\begin{equation}
    \begin{aligned}\label{eqn:model}
    ml^2\Ddot{\theta} = mgl \sin(\theta)+a.
    \end{aligned}    
\end{equation}
\noindent

Let $s_1=\theta$, $s_2=\dot{\theta}$. The dynamics of (\ref{eqn:model}) are defined as 
\begin{eqnarray}\label{demo:2D}
	\begin{aligned}
		\begin{bmatrix} \dot{s}_1 \\ \dot{s}_2 \end{bmatrix} = 
		\begin{bmatrix}
			s_{2}\\	\;-\frac{g}{l}\sin{(s_1)} +\frac{a}{ml^2}+d\;
		\end{bmatrix},
	\end{aligned}   
\end{eqnarray}
\noindent
where $d$ denotes unknown dynamics that generate by inaccurate parameters $\bar{m}=1.4$, $\bar{l}=1.4$, as \cite{cheng2019end} introduced. When we try to construct a polynomial system (\ref{eqn:usesys}), the $\sin(s_1)$ in (\ref{demo:2D}) will be approximated by Chebyshev interpolants within $s_1\in[-3,3]$. Then, with SOSOPT Toolbox \cite{seiler2013sosopt}, Mosek solver \cite{mosek2010mosek} and Tensorflow, we implement Algorithm 1 in the pendulum model. 

The related model parameters are given in Table 1.
\begin{center}
	\rm{TABLE 1}: Swing-up Pendulum Model Parameters\label{tab:test}\\
	\setlength{\tabcolsep}{1.5mm}{
		\begin{tabular}{cccc} \toprule
			Model Parameter &Symbol &Value &Units\\
			\midrule
			Pendulum mass & $m $ & $1.0 $ & $\text{kg}$ \\ 
			Gravity & $g$ & $10.0 $ & $\text{m}/\text{s}^2$ \\ 
			Pendulum length & $l$ & $1.0 $ & $\text{m}$ \\ 
			Input torque & $a$ & $[-15.0,15.0] $ & $\text{N}$ \\ 
			Pendulum Angle & $\theta $ & $[-1.00,1.00]$ & $\text{rad}$ \\ 
			Angle velocity & $\dot{\theta} $ & $[-60.0,60.0] $ & $\text{rad}/\text{s}$ \\ 
			Angle acceleration & $\Ddot{\theta}$ & --- & $\text{rad}/\text{s}^2$ \\ 
			\bottomrule
	\end{tabular}}
\end{center}

We want to maintain the pendulum angle $\theta$ always in a safe range $[-1.0,1.0]$ during the learning process. Three RL algorithms are selected and compared in this example, including the DDPG-ONLY algorithm \cite{lillicrap2015continuous}, CBF-based DDPG-QP algorithm \cite{cheng2019end} and CBF-based DDPG-SOSP algorithm (our work). The codes of CBF-based DDPG-SOSP can be found at the url: \href{https://github.com/Wogwan/CCTA2022\_SafeRL}{https://github.com/Wogwan/CCTA2022\_SafeRL}. 

All of these algorithms are trained in $150$ episodes and each episode contains $200$ steps. Each step is around $0.05s$. And the reward function $r=\theta^2+0.1\dot{\theta}^2+0.001a^2$ is defined such that the RL agent are expected to keep the pendulum upright with minimal $\theta$, $\dot{\theta}$ and $\Ddot{\theta}$. 

\begin{figure}[t] 
	\centering
	\includegraphics[width=0.84\linewidth]{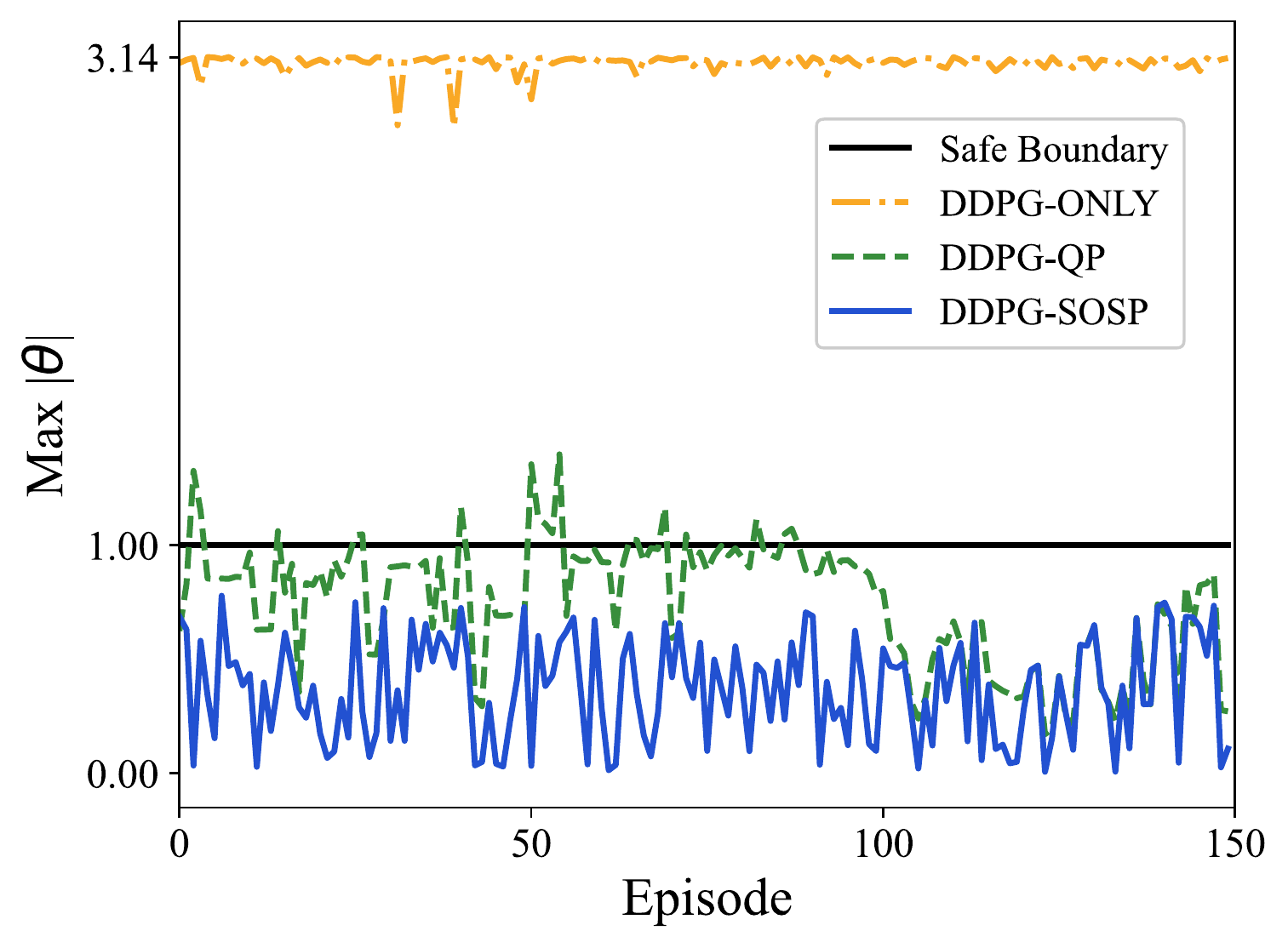}
	\caption{Comparison of the maximum absolute $\theta$ of different algorithms. The dotted solid line, the dashed line, and the solid line denote the performance of DDPG-ONLY, DDPG-QP and DDPG-SOSP, respectively The straight line denotes the safe boundary $\vert s_1\vert = 1$. }
	\label{fig:maxangle}
\end{figure}

\begin{figure}[htbp] 
	\centering
	\includegraphics[width=0.84\linewidth]{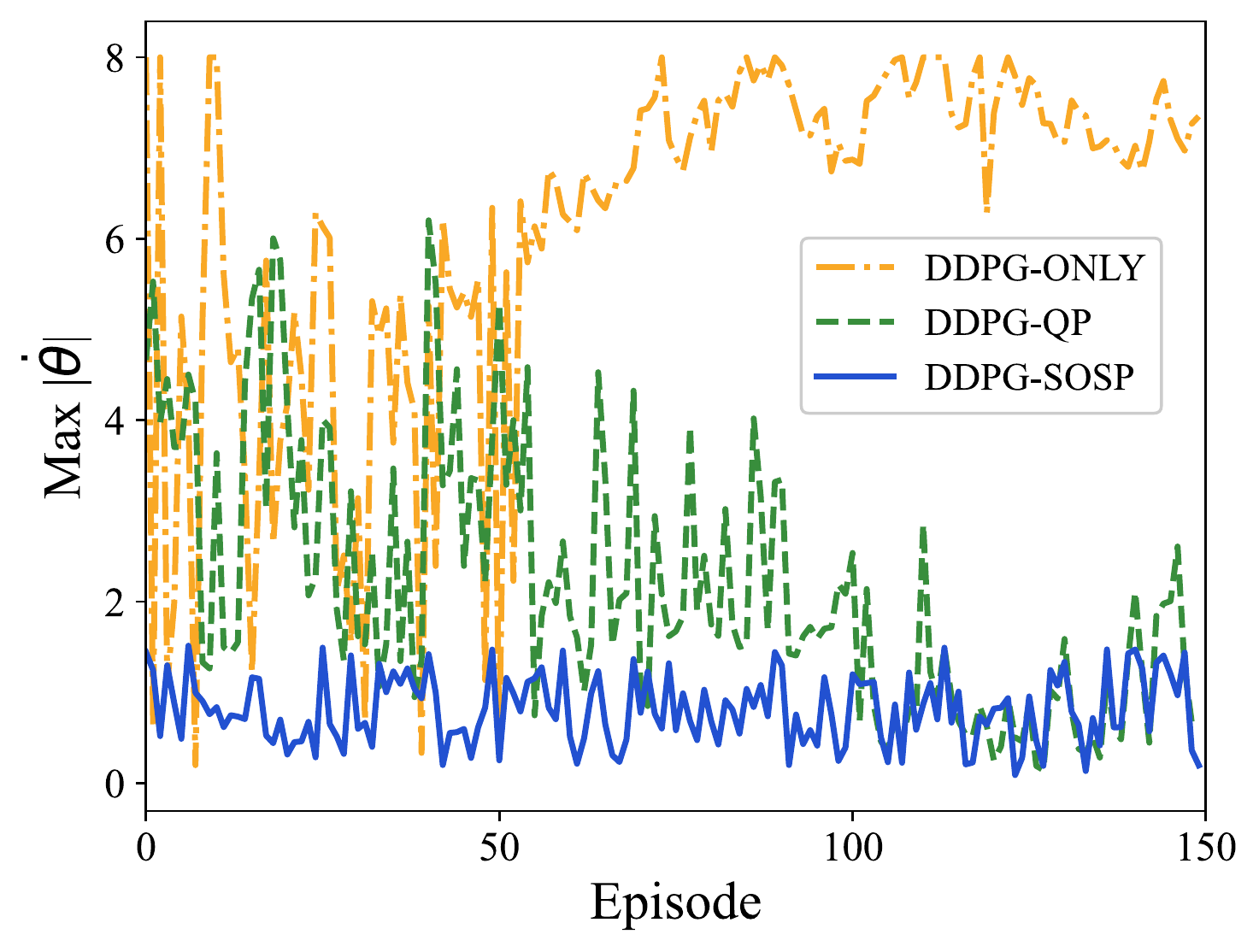}
	\caption{Comparison of the maximum absolute $\dot{\theta}$ of different algorithms. }
	\label{fig:maxanglespeed}
\end{figure}

\begin{figure}[htbp] 
	\centering
	\includegraphics[width=0.84\linewidth]{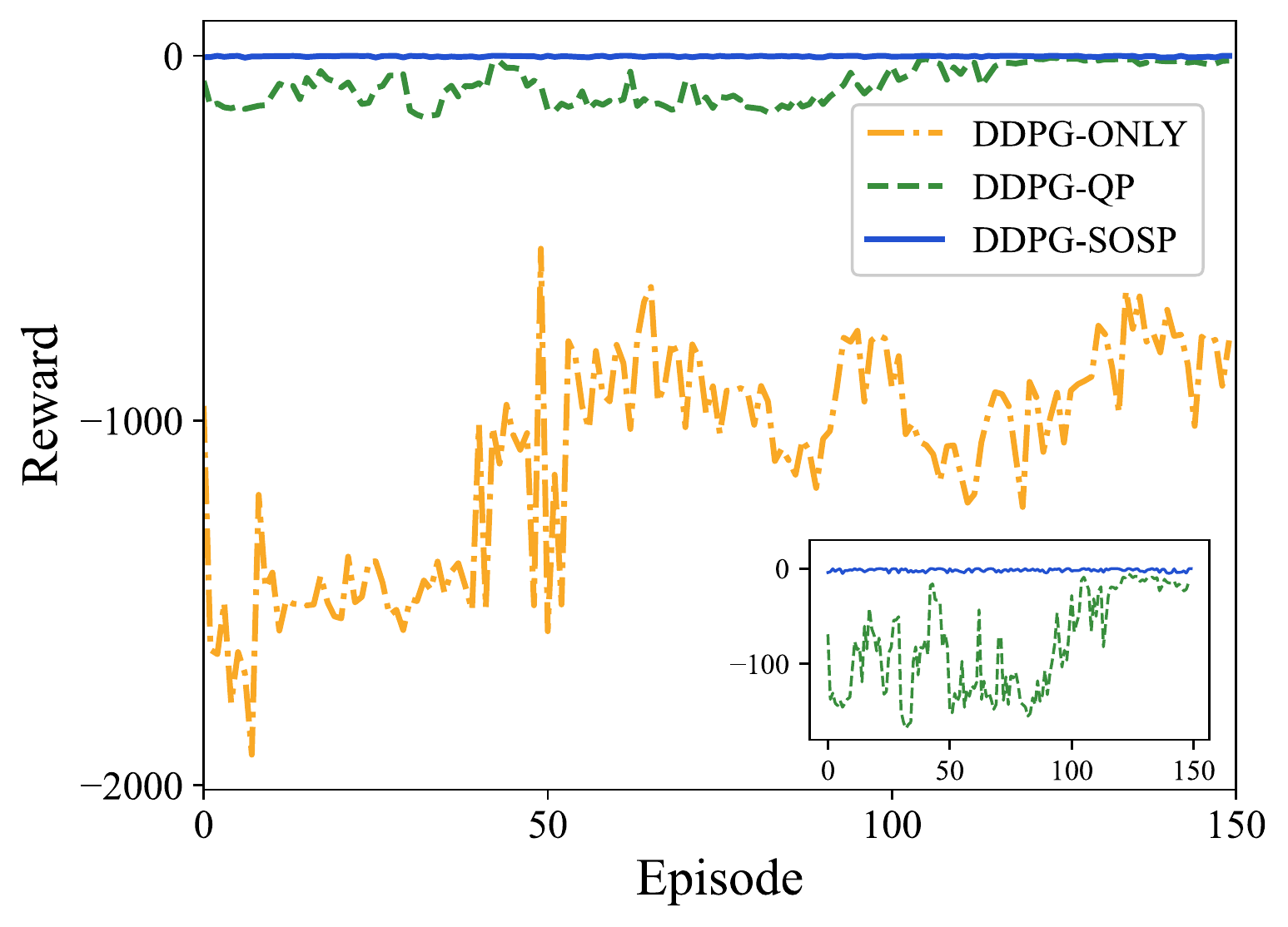}
	\caption{Comparison of the accumulated reward of different algorithms. }
	\label{fig:reward}
\end{figure}

\begin{figure}[b]
	\centering 
	\begin{minipage}{0.235\textwidth}
		\centering \label{fig:qp_1}
		\includegraphics[width=\textwidth]{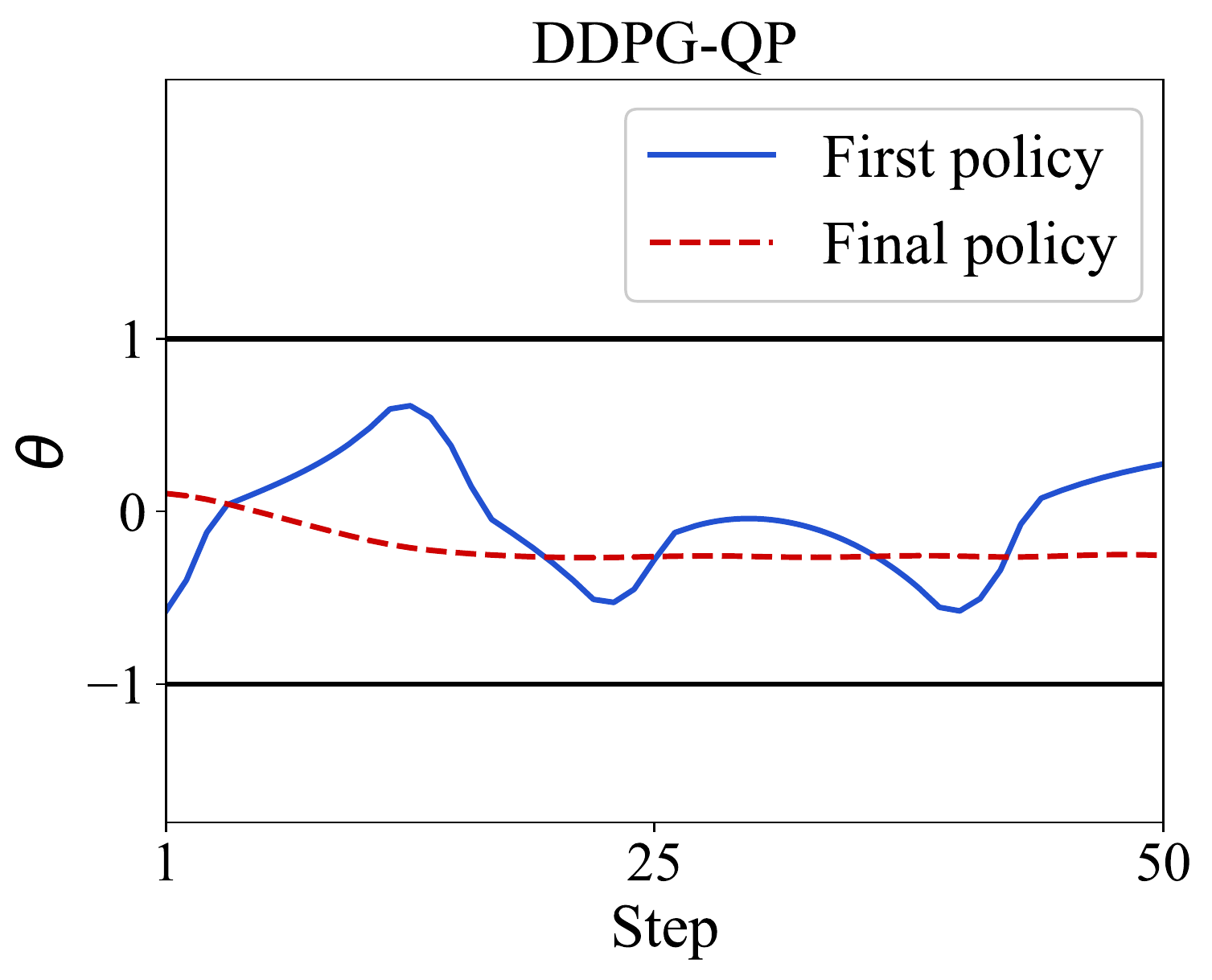}
		\centerline{(a)}
	\end{minipage}
	\begin{minipage}{0.235\textwidth}
		\centering \label{fig:sosp_1}
		\includegraphics[width=\textwidth]{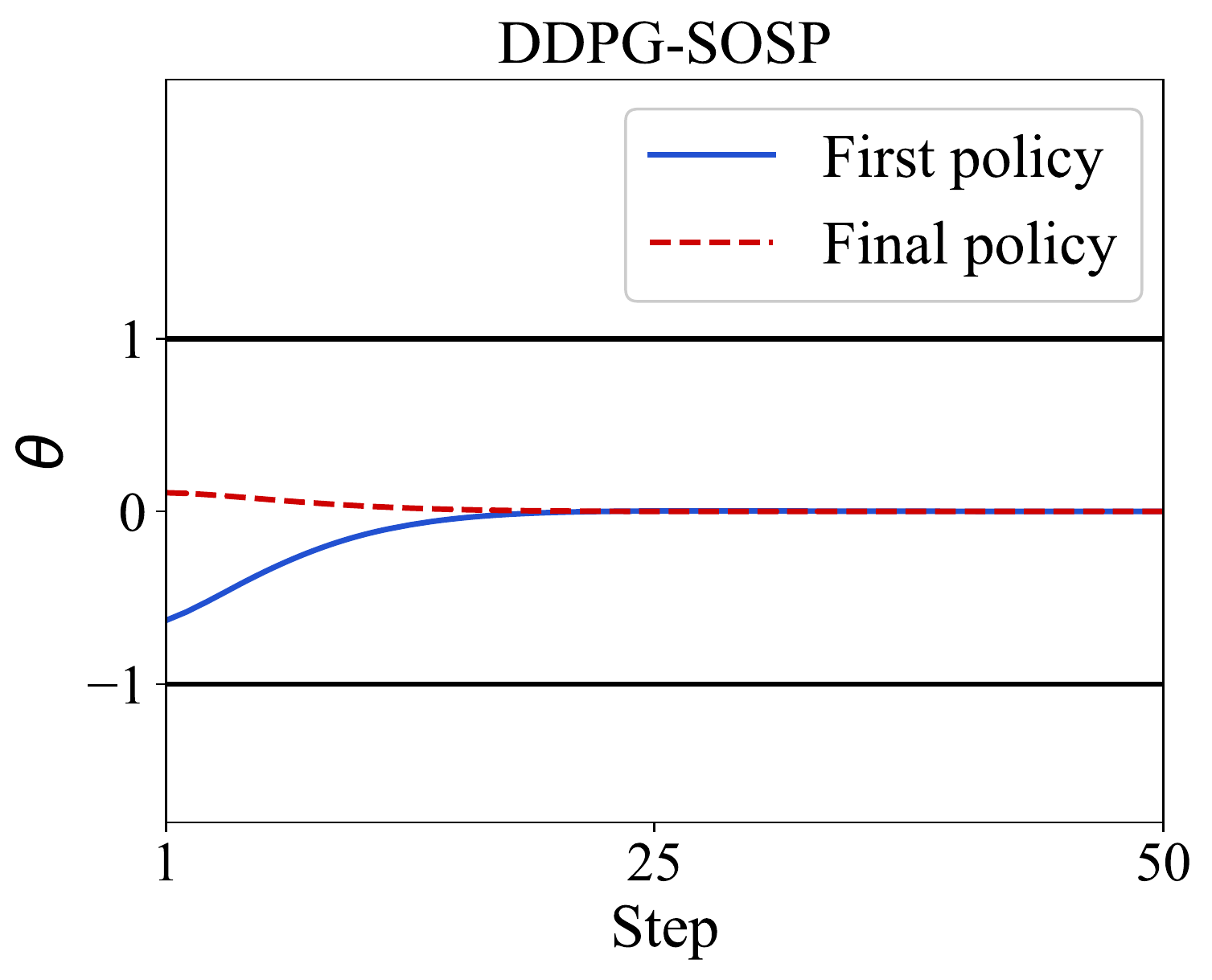}
		\centerline{(b)}
	\end{minipage}
	\caption{The comparison of the $1^{st}$ and the $150^{th}$ policy performance of the angle control task between $(a)$ DDPG-QP and $(b)$ DDPG-SOS.}
	\label{fig:fig_Comparison}
\end{figure}

Fig. \ref{fig:maxangle} compares the collected maximum $\vert\theta\vert$ at each episode by using different algorithms. As a baseline algorithm, DDPG-ONLY explored every state without any safety considerations, while DDPG-QP sometimes violated the safe state and DDPG-SOS completely kept in safe regions. 

The historical maximum value of $\vert \dot{\theta}\vert$ of these algorithms is compared and shown in Fig. \ref{fig:maxanglespeed}. It is found that DDPG-SOS is able to maintain the pendulum into a lower $\dot{\theta}$ easily than others over episodes. 

In Fig. \ref{fig:reward}, from the reward curve of these algorithms, it is easy to observe the efficiency of different algorithms: DDPG-SOS obtains comparatively lower average reward and thus a better performance.

Fig. \ref{fig:fig_Comparison} shows the first ($2^{nd}$ episode) and the final ($150^{th}$ episode) policy guided control performance between DDPG-QP and DDPG-SOSP algorithm. We observed $50$ steps of the pendulum angle to highlight the superiority of these two algorithms. Both in Fig. \ref{fig:fig_Comparison}(a) and (b), the final policy can reach a smoother control output with less time. Although DDPG-SOSP takes a quicker and smoother action to stabilize the pendulum than DDPG-QP, but the time consuming will increase due to the dynamics regression and optimization computation under uncertainty. Accordingly, the result of DDPG-SOSP algorithm of model (\ref{eqn:model}) is not only obviously stabler than DDPG-QP, but also maintaining the RL algorithm action's safety strictly.

\section{Conclusion}

In this paper, we consider a novel approach to guarantee the safety of reinforcement learning (RL) with sum-of-squares programming (SOSP). The objective is to find a safe RL method toward a partial unknown nonlinear system such that the RL agent's action is always safe. One of the typical RL algorithms, Deep Deterministic Policy Gradient (DDPG) algorithm, cooperates with control barrier function (CBF) in our work, where CBF is widely used to guarantee the dynamical safety in control theories. Therefore, we leverage the SOSP to compute an optimal controller based on CBF, and propose an algorithm to creatively embed this computation over DDPG. Our numerical example shows that SOSP in the inverted pendulum model obtains a better control performance than the quadratic program work. It also shows that the relaxations of SOSP can enable the agent to generate safe actions more permissively.

For the future work, we will investigate how to incorporate SOSP with other types of RL including value-based RL and policy-based RL. Meanwhile, we check the possibility to solve a SOSP-based RL that works in higher dimensional cases. 


\balance

\end{document}